\newtheorem{definition}{Definition}
\newtheorem{remark}{\bf Remark}
\newtheorem{assumption}{Assumption}
\newtheorem{theorem}{Theorem}
\newcommand{\xddots}{%
	\raise 4pt \hbox {.}
	\mkern 10mu
	\raise 1pt \hbox {.}
	\mkern 10mu
	\raise -2pt \hbox {.}
}
\title{\LARGE \bf Data-driven identification of dissipative linear models \\ for nonlinear systems}
\author{S. Sivaranjani, Etika Agarwal and Vijay Gupta
	\thanks{S. Sivaranjani, Etika Agarwal and Vijay Gupta are with the Department of Electrical Engineering, University of Notre Dame, South Bend, IN. \{sseethar@nd.edu, etika.agarwal09@gmail.com, vgupta2@nd.edu\}. 
	}
}
\begin{document}
	\maketitle
	\thispagestyle{empty}
	\pagestyle{empty}
	
	\begin{abstract}
		We consider the problem of identifying a dissipative linear model of an unknown nonlinear system that is known to be dissipative, from time domain input-output data. We first learn an approximate linear model of the nonlinear system using standard system identification techniques and then perturb the system matrices of the linear model to enforce dissipativity, while closely approximating the dynamical behavior of the nonlinear system. Further, we provide an analytical relationship between the size of the perturbation and the radius in which the dissipativity of the linear model guarantees local dissipativity of the unknown nonlinear system. We demonstrate the application of this identification technique to the problem of learning a dissipative model of a microgrid with high penetration of variable renewable energy sources.
	\end{abstract}
	
	\begin{keywords}
		Dissipativity, identification, nonlinear systems, learning, passivity. 
	\end{keywords}
	
	\section{Introduction}
	The fields of system identification and control design initially developed in isolation \cite{gevers2005identification}. However, two systems that are `close' to each other in terms of the input-output response in the open loop may yield very different performance when put in feedback with the same controller. This realization led to the development of the area of identification for control, where the goal is to identify models such that controllers designed based on these models provide specific performance guarantees on the true system (see \cite{gevers2005identification} for a comprehensive survey of this area). Many such methods were developed over the last few decades, the most popular of which are iterative development of the system model and the controller \cite{aastrom1994analysis}-\nocite{de1997suboptimal}\nocite{lee1993new}\nocite{gevers1993towards}\nocite{hjalmarsson1996model}\cite{zang1995iterative}, and the development of data-based uncertainty sets for robust control \cite{bombois2000measure}-\nocite{makila1995worst}\nocite{de1995quantification}\cite{kosut1994least}. With the recent emergence of learning-based controller design, this field has seen a resurgence of interest as well. An important challenge that still remains open in this area is that of ensuring analytical guarantees on stability and performance of the closed loop system, with controllers that are designed based on models that are learned from data. 
	
	In this paper, we consider the following problem. Assume that we have access to some information about the true system satisfying a structural property that makes it easy to design a controller and obtain a desired performance or stability guarantee on the closed loop system. Can we identify a system model that satisfies this property? In particular, here, we consider the property to be that of dissipativity. Dissipativity is an important input-output property of dynamical systems \cite{antsaklis2013control} which encompasses many important special cases like $\mathcal{L}_2$ stability, passivity and conicity. Dissipativity, thus, finds application in various domains ranging from robotics \cite{hatanaka2015passivity}, electromechanical systems \cite{ortega2013passivity} and aerospace systems \cite{sivaranjani2015passivity}, to process control \cite{tippett2013distributed}\cite{bao2007process}, networked control and cyberphysical systems~\cite{agarwal2019distributed}-\nocite{sztipanovits2011toward}\cite{zhao2014feedback} and energy networks \cite{agarwal2017feedback}-\nocite{sivaranjani2018conic}\nocite{sivaranjani2018mixed}\cite{sivaranjani2018distributed}. Dissipative systems possess several desirable properties like stability and compositionality over certain interconnections \cite{antsaklis2013control}. Hence, if the original system is known to be dissipative, and we could exploit this fact to learn dissipative models, these models can then be used to design controllers that provide desired stability and performance guarantees on the original system. Note that existing methods in system identification may not yield a dissipative model even if the system is known to be dissipative. Furthermore, even if the model is dissipative, the dissipativity properties of the model do not, in general, yield any guarantee on the dissipativity properties of the true system. %In this paper, we provide an algorithm to identify a dissipative model of such a system, such that the dissipativity  of the model provides guarantees on the dissipativity of the true system as well. %Such a passive model can then be used, among other purposes, for designing a controller that provides stability and performance guarantees for the closed loop system when used on the true system as well.

	%In several critical applications such as the power grid, where high-order nonlinear models are commonly encountered, it may not be possible to provide provable stability and performance guarantees using controllers designed based on purely data-driven models. However, if the original system is known to be passive, this fact may be exploited to learn passive linear models, which can then be used to design controllers that provide desired stability and performance guarantees on the original nonlinear system. Furthermore,  approximate linear dynamical models are already known from the physics in most application domains, or easily obtained by standard system identification techniques \cite{katayama2006subspace}\cite{verhaegen2007filtering}. Therefore, the insight into the structure and physics of these models can be exploited to estimate a control-oriented passive dynamical model of the system.
	
	% \textcolor{blue}{As stated earlier, existing methods in system identification may not yield a dissipative model even if the system is known to be dissipative.} 
	
	In this paper, we solve this problem of identifying a dissipative linear model of an unknown dissipative nonlinear dynamical system from given time-domain input-output data. Inspired by passive macromodeling approaches from RF circuit theory \cite{grivet2015passive}, we propose a two-stage approach. First, we learn an approximate linear model of the system, referred to as a baseline model, either using standard system identification techniques or using physics-based knowledge of the system. Next, we perturb the system matrices of this baseline linear model to enforce quadratic (QSR) dissipativity. We show that this perturbation can be chosen to ensure that the input-output behavior of the dissipative linear approximation closely approximates that of the original nonlinear system, provided that the baseline linear model closely approximates the nonlinear system dynamics in the input-output sense. Further, we provide an analytical condition relating the size of the perturbation to the radius in which local quadratic dissipativity properties of the nonlinear system can be guaranteed by the dissipative linear model. This relationship formalizes the intuition that larger perturbations lead to poorer approximations; in other words, the radius of local dissipativity of the nonlinear system decreases as the size of the perturbation is increased. Finally, we demonstrate the application of this approach to the problem of learning a dissipative model towards control of a microgrid with high penetration of renewable energy sources.
	
	We remark that if the main objective is simply to learn the passivity index of the system, which can be considered a specific dissipativity property, recently developed allied approaches can be utilized to directly learn the index from input-output data \cite{montenbruck2016some}-\nocite{romer2017sampling}\cite{zakeri2019data}. In contrast to these works, our approach can be used to learn a broader class of dissipative models, encompassing properties like passivity, sector boundedness and $\mathcal{L}_2$ stability. In addition, our approach yields a model with guarantees on the dissipativity and the input-output response of the original system. Further, there is a stream of work that relates passivity of a system to its approximation~\cite{xia2015determining}\cite{xia2016passivity}; however, in that stream, a (dissipative) model of the system is assumed to be present, and is also assumed to be the first order Taylor approximation of the nonlinear system, which may not be the case for models identified from data. 
	
	This paper is organized a follows. In Section \ref{sec:sys_dyn}, we introduce the system model and formally state the problem addressed in this paper. In Section \ref{sec:sol}, we present the two-stage approach to learning linear dissipative models for unknown nonlinear systems. %, and provide an analytical quantification of the tradeoff between the perturbation size and local dissipativity properties of the model. 
	In Section \ref{sec:case}, we demonstrate this approach numerically. 
	
	\vspace{0.5em}
	\textit{Notation:} We denote the sets of real numbers, positive real numbers including zero, and $n$-dimensional real vectors by $\mathbb{R}$, $\mathbb{R}_{+}$ and $\mathbb{R}^{n}$ respectively. %Define $\mathcal{N}_N=\{1,\ldots,N\}$, where $N$ is a natural number excluding zero. 
	Given a matrix $A \in \mathbb{R}^{m\times n}$, $A'\in \mathbb{R}^{n \times m}$ represents its transpose. %, and if all the eigenvalues of $A$ are real, then $\lambda_{max}(A)$ represents its maximum eigenvalue. 
	A symmetric positive definite matrix $P \in \mathbb{R}^{n \times n}$ is represented as $P>0$ (and as $P\geq 0$, if it is positive semi-definite). The standard identity matrix is denoted by $I$, and a matrix with all elements equal to 1 is denoted by $\mathbf{1}$, with dimensions clear from the context. %Given sets $A$ and $B$, $A\backslash B$ represents the set of all elements of $A$ that are not in $B$. 
	Given a function $f$, $\text{dom} f$ represents its domain.
	
	\section{Problem Formulation}\label{sec:sys_dyn}
	We consider an unknown nonlinear dynamical system
	\begin{equation}
	\label{eq:nonlinear}
	\begin{aligned}
	S_{nl}: \quad  &\dot x(t)=f(x(t),u(t))\\
	&y(t)=g(x(t),u(t)),
	\end{aligned}
	\end{equation}
	where $f$ and $g$ are differentiable functions, and $x(t) \in \mathbb{R}^n$, $u(t)\in \mathbb{R}^m$ and $y(t)\in \mathbb{R}^p$ represent the state, input and output of the system at time $t\in \mathbb{R}_{+}$ respectively.
	\begin{assumption}
		The functions $f$ and $g$ are Lipschitz continuous, that is, 
		\begin{equation}\label{eq:lipschitz}
		\begin{aligned}
		||f(a_1)-f(a_2)||&\leq L_f ||a_1-a_2||, \quad \forall a_1,a_2 \in \text{dom} f\\
		||g(a_1)-f(a_2)||&\leq L_g ||a_1-a_2||, \quad \forall a_1,a_2 \in \text{dom} g,
		\end{aligned}
		\end{equation}
		where $L_f$ and $L_g$ are the Lipschitz constants of $f$ and $g$ respectively.
		\begin{assumption}
			There exists an equilibrium point $(x^{*},u^{*})=(0,0)$ for system \eqref{eq:nonlinear} such that $f(x^{*},u^{*})=0$.
		\end{assumption}
	\end{assumption}
	
	Note that the assumption of an equilibrium point at the origin is sufficiently general since the system dynamics around a non-zero equilibrium can be obtained by a suitable coordinate transformation. The following definition of dissipativity is standard for such systems; however, we also define the notion of strict dissipativity as follows. 
	\begin{definition}[Dissipativity and Strict Dissipativity]\label{def:passivity}
		%Consider all states $x \in \mathcal{X}$ and control inputs $u \in \mathcal{U}$, where $\mathcal{X}\times \mathcal{U}$ is a neighborhood of the equilibrium (origin) $(x^{*},u^{*})=0$. 
		Let $\mathcal{X}\times \mathcal{U}$ be a neighborhood of the equilibrium (origin) $(x^{*},u^{*})=0$. The nonlinear system $S_{nl}$ is said to be (locally) \textit{dissipative} with dissipativity matrices $Q=Q'$, $S$ and $R=R'$, if
		% such that 
		\begin{equation}\label{eq:diss}
		y'(t)Qy(t)+u'(t)Ru(t)+2y'(t)Su(t) \geq 0,
		\end{equation}
		$\forall t\in \mathbb{R}_{+}$ and $\forall x \in \mathcal{X}$ and control inputs $u \in \mathcal{U}$. Further, the system $S_{nl}$ is said to be (locally) \textit{strictly dissipative} (SD) with dissipativity matrices $Q=Q'$, $S$ and $R=R'$, if there exist constants $\rho>0$ and $\nu>0$, referred to as dissipativity indices, such that
		\begin{multline}\label{eq:passivity}
		%   \begin{aligned}
		y'(t)Qy(t)+u'(t)Ru(t)+2y'(t)Su(t) \\ \geq \rho x'(t)x(t) +\nu u'(t)u(t)
		%\end{aligned} 
		\end{multline}
		$\forall t\in \mathbb{R}_{+}$ and $\forall x \in \mathcal{X}$ and control inputs $u \in \mathcal{U}$. %Further, the system \eqref{eq:nonlinear} is said to be \textit{locally dissipative} (or \textit{locally SD}) if \eqref{eq:diss} (or \eqref{eq:passivity} holds $\forall x \in \mathcal{X}$ and $u \in \mathcal{U}$, where $\mathcal{X}\times \mathcal{U}$ is a neighborhood of the equilibrium (origin) $(x^{*},u^{*})=0$.
	\end{definition}
	
	We ignore the qualifier `locally' in front of dissipativity properties for pedagogical ease. For the remainder of the paper, we also drop the dependence of all vectors on time for simplicity of notation.
	\begin{remark}\label{rem:qsr}
		Definition \ref{def:passivity} represents the property of quadratic dissipativity, commonly referred to as $QSR$-dissipativity in literature \cite{agarwal2018passivity}. We choose this specific class of dissipativity, since it can be used to capture several useful system properties through appropriate choice of the dissipativity matrices $Q$, $S$ and $R$ in \eqref{eq:diss} such as:
		\begin{enumerate}
			\item[(i)] \textit{passivity}, with $Q = 0$, $S = \frac{1}{2}I$ and $R =0$,
			\item[(ii)] \textit{strict passivity}, with $Q=-a I$, $S= \frac{1}{2}I$ and $R =-b I$, where $a,b \in \mathbb{R}^+\backslash\{0\}$,
			\item[(iii)] \textit{$\mathcal{L}_2$ stability}, with $Q=-\frac{1}{\gamma} I$, $S= 0$ and $R =\gamma I$ where $\gamma\in \mathbb{R}^+$ is an $\mathcal{L}_2$ gain of the system, 
			\item[(iv)] \textit{conicity}, with $Q=-I$, $S = c I$ and $R = (r^2-c^2)I$, where $c\in\mathbb{R}$ and $r \in \mathbb{R}^+\backslash\{0\}$, and,
			\item[(v)] \textit{sector-boundedness}, with $Q=-I$, $S = (a+b)I$ and $R = -ab I$, where $a,b\in \mathbb{R}$.
		\end{enumerate}
	\end{remark}
	Note that any SD system is also dissipative and satisfies \eqref{eq:diss}. 
	We now formally state the identification problem addressed in this paper. 
	
	\textbf{Problem $\mathcal{P}$:} Given a set of time domain input-output measurements $(\hat y, \hat u)$ from a dissipative nonlinear system $S_{nl}$ satisfying \eqref{eq:diss}, the aim of this paper is to obtain a linear model
	\begin{equation}
	\label{eq:linear}
	\begin{aligned}
	S_l: \quad     &\dot {\tilde x}=A\tilde x+Bu\\
	&\tilde y=C\tilde x+Du,
	\end{aligned}
	\end{equation}
	such that 
	\begin{enumerate}
		\item[(i)] $||\tilde y-\hat y||^2_2 <\tilde \delta_y$ when $u\in\mathcal{U}$, and
		\item[(ii)] $S_l$ is strictly dissipative. %, that is Definition \ref{def:passivity} holds for $S_l$ with $y(t) \mapsto \tilde y(t)$.
	\end{enumerate}
	\begin{figure}[b]
		\centering
		% \vspace{0.6em}
		\includegraphics[scale=0.5]{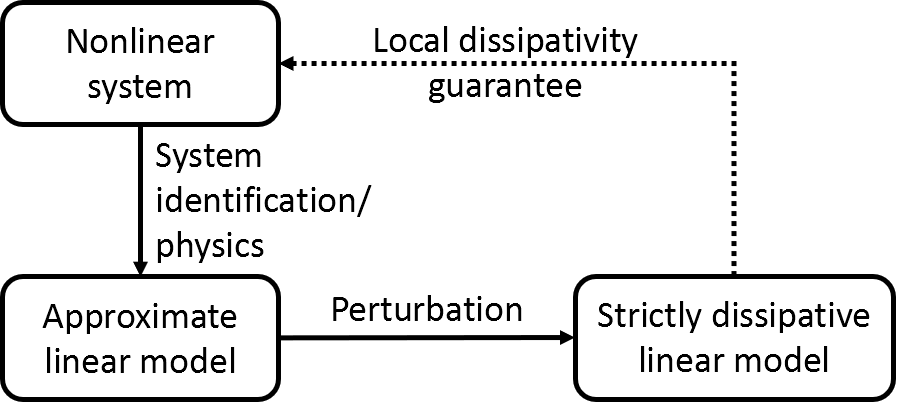}
		\caption{Schematic of two-stage approach for identification of dissipative linear models.}
		\label{fig:schematic}
	\end{figure}
	
	We will address this problem in two stages, as shown in Fig. \ref{fig:schematic}. We will begin by assuming that an approximate linear model of $S_{nl}$ can be estimated either through standard regression-based or subspace system identification, and/or from the physics of the system. If this linear approximation is not SD, we will then introduce a bounded perturbation into the system matrices such that the resulting perturbed model is SD, while closely approximating the behavior of the nonlinear system $S_{nl}$. We require the linear model $S_l$ to be strictly dissipative rather than just dissipative, since this allows us to provide guarantees on the local dissipativity of the original system $S_{nl}$.
	
	We conclude this section by stating a matrix inequality that can be used to verify if the linear model $S_l$ is SD.
	\begin{theorem}\cite{xia2015determining}
		The linear system \eqref{eq:linear} is strictly dissipative if there exists a symmetric matrix $P=P'$, and constants $\nu>0$ and $\rho>0$ satisfying 
		\begin{equation}\label{eq:ssip_lmi}
		\begin{bmatrix}
		A'P+PA-C'QC+\rho I & PB-\hat S\\
		B'P-\hat S' & -\hat R+\nu I
		\end{bmatrix}  <0, 
		\end{equation}
		where $\hat S=C'S+C'QD$ and $\hat R=R+D'S+S'D+D'QD$. Further, if this condition is satisfied, the linear system is {\em globally} dissipative.
	\end{theorem}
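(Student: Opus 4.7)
The plan is a standard storage-function / S-procedure argument. I would take the candidate storage function $V(\tilde x) = \tilde x'P\tilde x$ and compute its derivative along trajectories of $S_l$, obtaining
\begin{equation*}
\dot V = \tilde x'(A'P + PA)\tilde x + 2\tilde x'PBu.
\end{equation*}
In parallel, I would substitute $\tilde y = C\tilde x + Du$ into the quadratic supply rate and collect the terms in $\tilde x$ and $u$ to get
\begin{equation*}
\tilde y'Q\tilde y + u'Ru + 2\tilde y'Su = \tilde x'C'QC\tilde x + 2\tilde x'\hat S u + u'\hat R u,
\end{equation*}
with $\hat S = C'S + C'QD$ and $\hat R = R + D'S + S'D + D'QD$, matching the notation of the theorem.

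The crux of the proof is then a direct algebraic identity: combining the two displays above yields
\begin{equation*}
\dot V - \bigl(\tilde y'Q\tilde y + u'Ru + 2\tilde y'Su\bigr) + \rho\,\tilde x'\tilde x + \nu\,u'u = \begin{bmatrix}\tilde x \\ u\end{bmatrix}'\! M \begin{bmatrix}\tilde x \\ u\end{bmatrix},
\end{equation*}
where $M$ is precisely the block matrix on the left-hand side of \eqref{eq:ssip_lmi}. Under the LMI hypothesis $M < 0$, the right-hand side is strictly negative on every nonzero $(\tilde x,u)$, so the differential dissipation inequality
\begin{equation*}
\dot V \le \tilde y'Q\tilde y + u'Ru + 2\tilde y'Su - \rho\,\tilde x'\tilde x - \nu\,u'u
\end{equation*}
holds along every trajectory and every admissible input. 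Integrating from $0$ to $t$ with $\tilde x(0) = 0$ (from the equilibrium assumption) delivers the strict dissipation inequality underlying \eqref{eq:passivity}, establishing strict dissipativity of $S_l$. For the global claim, I would simply observe that $M$ and the resulting quadratic form depend only on the constant matrices $A,B,C,D,P,Q,S,R$ and not on any local neighborhood $\mathcal{X}\times\mathcal{U}$, so the dissipation inequality in fact holds on all of $\mathbb{R}^n\times\mathbb{R}^m$.

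The algebraic identity above is straightforward bookkeeping and is not a genuine obstacle; the one conceptual subtlety I would flag is that $P$ is only required to be symmetric rather than positive semidefinite, so $V$ should be viewed as an indefinite storage-type function whose role is to encode the differential inequality, not to bound any physical energy. This is standard in QSR-dissipativity LMI characterizations, where the sign information needed for the conclusion is carried entirely by $M<0$ together with $\rho,\nu>0$, rather than by the sign of $P$ itself.
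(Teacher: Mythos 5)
The paper does not actually prove this theorem; it is quoted from \cite{xia2015determining} without proof, so there is no internal argument to compare against. Your route --- the quadratic storage function $V(\tilde x)=\tilde x'P\tilde x$, substitution of $\tilde y = C\tilde x + Du$ into the supply rate, and the identity expressing $\dot V - (\tilde y'Q\tilde y + u'Ru + 2\tilde y'Su) + \rho\,\tilde x'\tilde x + \nu\, u'u$ as the quadratic form of the LMI matrix --- is the standard KYP-type argument, and the algebra (in particular your expressions for $\hat S$ and $\hat R$) checks out.

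There are, however, two places where your write-up has a genuine gap. First, your closing claim that $P$ may be left indefinite because ``the sign information needed for the conclusion is carried entirely by $M<0$'' is incorrect. The LMI only gives the differential inequality $\dot V \le s(u,\tilde y) - \rho\|\tilde x\|^2 - \nu\|u\|^2$; integrating from rest yields
\begin{equation*}
\int_0^T s\,dt \;\ge\; V(\tilde x(T)) + \int_0^T\bigl(\rho\|\tilde x\|^2+\nu\|u\|^2\bigr)\,dt,
\end{equation*}
and the term $V(\tilde x(T))$ can only be discarded if $P\ge 0$. With an indefinite $P$ the conclusion fails; this is precisely why the paper's problems $\mathcal{P}_1$ and $\mathcal{P}_2$ impose $P>0$ even though the theorem statement asks only for symmetry. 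Second, what your argument establishes is the standard integral/differential dissipation inequality, whereas Definition~\ref{def:passivity} of the paper states (strict) dissipativity as a pointwise-in-time inequality on the supply rate with no storage function, and it is that pointwise form that is invoked later in the proof of Theorem~\ref{thm:pert}. The two notions do not coincide, so you should at least flag that what you prove is \eqref{eq:passivity} in the integrated sense rather than literally as written; this mismatch is arguably a defect of the paper's definitions rather than of your proof, but it cannot be passed over silently.
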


	\section{Identification of Dissipative Models}\label{sec:sol}
	In this section, we describe a two-stage approach to identify a dissipative linear model $S_l$ that closely approximates the nonlinear system $S_{nl}$. This approach is inspired by similar perturbation approaches used to obtain passive macromodels in RF electronics literature \cite{grivet2015passive}.
	\subsubsection*{Baseline linear model} Given a set of time domain input-output measurements $(\hat y, \hat u)$, $\hat u \subset \mathcal{U}$ from system $S_{nl}$ in the vicinity of the equilibrium, we begin by assuming that a standard  identification technique \cite{katayama2006subspace}\cite{verhaegen2007filtering} can be used to identify an approximate linear model,
	\begin{equation}
	\label{eq:app_linear}
	\begin{aligned}
	S_b: \quad     &\dot {\bar x}=\bar A\bar x+\bar Bu\\
	&\bar y=\bar C\bar x+\bar Du,
	\end{aligned}
	\end{equation}
	referred to as the \textit{baseline linear model}, such that $||\bar y-\hat y||^2_2 <\bar \delta_y$, $\forall u\in \mathcal{U}$. 
	
	We also estimate the Lipschitz constant $L_g$ of $S_{nl}$ as 
	\begin{equation}\label{eq:lipschitz}
	L_g=\max_{u_1,u_2 \in \hat u, u_1\neq u_2} \frac{||y_1-y_2||}{||u_1-u_2||},
	\end{equation}
	where $y_1$ and $y_2$ are the outputs of $S_{nl}$ corresponding to the inputs $u_1$ and $u_2$ respectively. Alternatively, \eqref{eq:lipschitz} can be applied to the approximate linear system $S_b$ to easily obtain an estimate of the Lipschitz constant $L_g$. Note that it has been observed that \eqref{eq:lipschitz} provides a good estimate of the Lipschitz constant of $S_{nl}$ if the data set $(\hat y, \hat u)$ is sufficiently rich \cite{montenbruck2016some}.
	
	\subsubsection*{Perturbed linear model} If the linear model $S_b$ is not SD, that is, it does not satisfy \eqref{eq:passivity} with $y$ replaced by $\bar y$, then, we would like to introduce a bounded perturbation $\Delta C$ into the output matrix of $S_b$ to obtain the perturbed linear model
	\begin{equation}
	\label{eq:app_linear}
	\begin{aligned}
	S_l: \quad     &\dot {\tilde x}= A\tilde x+ Bu\\
	&\tilde y= C\tilde x+ Du,
	\end{aligned}
	\end{equation}
	where $A=\bar A$, $B=\bar B$, $C=\bar C +\Delta C$ and $D=\bar D$. 
	\begin{remark} We have chosen to perturb the output matrix $\bar C$ to obtain the perturbed linear model $S_l$. However, we make the following comments. 
		\begin{enumerate}
			\item[(i)] The input matrix $\bar B$ or the feedforward matrix $\bar D$ can be perturbed instead of the output matrix $\bar C$, depending on system specific requirements. 
			\item[(ii)] Any perturbation on the system matrix $\bar A$ is not preferable, since we would like the perturbed linear model to preserve any information about the dominant modes of the nonlinear system that is embedded in the baseline linear model, thereby allowing the perturbed model to closely approximate the original nonlinear system. 
			\item[(iii)] If the baseline model $S_b$ has $\bar D=0$ and it is required to ensure $D>0$ in the linear model $S_l$ to meet strict dissipativity or other desired system properties, then the feedforward matrix $\bar D$ can be perturbed to enforce the positive definiteness of $D$. 
		\end{enumerate}
	\end{remark}
	
	We would like to minimize the size of the perturbation $||\Delta C||_2^2$, in order to ensure that the linear model $S_l$ closely approximates the original nonlinear system $S_{nl}$. Further, we would like to relate the strict dissipativity of $S_l$ to local  dissipativity of the nonlinear system $S_{nl}$. We have the following result on the choice of the perturbation $\Delta C$, and its relationship to the strict dissipativity of $S_l$ and $S_{nl}$. 
	\begin{theorem}\label{thm:pert}
		Given the linear model \eqref{eq:app_linear}, if problem
		\begin{subequations}
			\begin{align} 
			\mathcal{P}_1: &\quad \min \limits_{\nu>0,\rho>0,P>0,\Delta C} \alpha=||\Delta C||^2_2\\
			\text{s.t.} &\quad \begin{bmatrix}
			A'P+PA-C'QC+\rho I & PB-\hat S \\
			B'P-\hat S' & -\hat R+\nu I
			\end{bmatrix}  <0,  \label{eq:p2}  \quad \nonumber\\
			&\quad \hat S=C'S+C'QD, \quad \nonumber \\ &\quad \hat R=R+D'S+S'D+D'QD \quad \nonumber \\
			&\quad P=P'>0\end{align}\vspace{-2.5em}\begin{align}  %\label{eq:p3}
			\quad \quad \rho &\geq ||Q||^2||\bar C+\Delta C||^2 \quad \nonumber \\&+2\left(||Q||^2+I\right) (Lg+||\bar C +\Delta C)||)^2  \label{eq:l1} \end{align}\vspace{-2.3em}\begin{align}
			\quad \quad\quad  \nu &\geq ||S+QD||^2+ 2\left(||Q||^2+I\right) (Lg+||D||)^2 \label{eq:l2}
			\end{align}
		\end{subequations}
		
		is feasible, then
		\begin{enumerate}
			\item[(i)] $S_l$ is SD,
			\item[(ii)] $S_{nl}$ is locally dissipative in a neighborhood $\mathcal{X}\times \mathcal{U}$ around the origin, and
			\item[(iii)] $S_l$ closely approximates $S_{nl}$, that is $$||\tilde y-\hat y||^2_2 <\tilde \delta_y=(1+\beta)\bar \delta_y, \beta\geq 0, \forall u\in \mathcal{U}.$$
		\end{enumerate}
	\end{theorem}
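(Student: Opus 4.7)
\emph{Part (i).} This is immediate. The LMI constraint \eqref{eq:p2}, together with the positivity of $\nu$, $\rho$, and $P=P'$, is precisely the strict-dissipativity LMI from the cited Xia 2015 theorem applied to the linear model $S_l$ with matrices $(A,B,C,D)=(\bar A,\bar B,\bar C+\Delta C,\bar D)$. So feasibility of $\mathcal{P}_1$ immediately gives strict dissipativity of $S_l$, namely
\[
\tilde y'Q\tilde y+u'Ru+2\tilde y'Su\;\geq\;\rho\,\tilde x'\tilde x+\nu\,u'u.
\]

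\emph{Part (ii), the main step.} Fix any $(x,u)\in\mathcal{X}\times\mathcal{U}$ and set $y=g(x,u)$. I would evaluate the linear output at the nonlinear state by putting $\tilde x=x$, giving $\tilde y=Cx+Du$, and introduce the error $e=y-\tilde y$. Expanding the nonlinear $QSR$ supply rate,
\[
y'Qy+u'Ru+2y'Su=\bigl[\tilde y'Q\tilde y+u'Ru+2\tilde y'Su\bigr]+\bigl[e'Qe+2(Q\tilde y+Su)'e\bigr],
\]
so by part (i) it suffices to show that the bracketed error piece is bounded below by $-\rho\|x\|^2-\nu\|u\|^2$. I would control this piece with Young's inequality $2|a'b|\leq\|a\|^2+\|b\|^2$ and submultiplicativity, yielding
\[
2(Q\tilde y+Su)'e\;\geq\;-\|QCx+(QD+S)u\|^2-\|e\|^2,\qquad e'Qe\;\geq\;-\|Q\|\,\|e\|^2.
\]
Since $g$ is Lipschitz with $g(0,0)=0$ and $\|\tilde y\|\leq\|C\|\|x\|+\|D\|\|u\|$, the triangle inequality gives the key bound
\[
\|e\|^2\;\leq\;2(L_g+\|C\|)^2\|x\|^2+2(L_g+\|D\|)^2\|u\|^2.
\]
Combining and splitting $\|QCx+(QD+S)u\|^2\leq 2\|Q\|^2\|C\|^2\|x\|^2+2\|S+QD\|^2\|u\|^2$ (or folding a single $\|Q\|^2$ factor in, whichever choice reproduces the paper's constants), the coefficients of $\|x\|^2$ and $\|u\|^2$ on the right-hand side match exactly the sufficient upper bounds on $\rho$ and $\nu$ prescribed by \eqref{eq:l1}--\eqref{eq:l2} with $C=\bar C+\Delta C$. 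Hence the error piece is dominated by $\rho\|x\|^2+\nu\|u\|^2$, which establishes $y'Qy+u'Ru+2y'Su\geq 0$ locally.

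\emph{Part (iii).} Because $\mathcal{P}_1$ only perturbs the output matrix, $A=\bar A$, $B=\bar B$, $D=\bar D$, so under the same input and initial condition the state trajectories of $S_l$ and $S_b$ coincide: $\tilde x(t)=\bar x(t)$. Therefore $\tilde y-\bar y=\Delta C\,\bar x$, which gives the $\mathcal{L}_2$ bound $\|\tilde y-\bar y\|_2^2\leq\|\Delta C\|_2^2\,\|\bar x\|_2^2$. Combining with the baseline guarantee $\|\bar y-\hat y\|_2^2<\bar\delta_y$ via $(a+b)^2\leq(1+\epsilon)a^2+(1+1/\epsilon)b^2$ for some $\epsilon>0$ yields
\[
\|\tilde y-\hat y\|_2^2\;<\;(1+\epsilon)\bar\delta_y+(1+1/\epsilon)\,\|\Delta C\|_2^2\,\|\bar x\|_2^2\;=\;(1+\beta)\bar\delta_y
\]
with $\beta\geq 0$ depending on $\epsilon$, $\|\Delta C\|_2^2$ (which $\mathcal{P}_1$ minimizes) and a bound on $\|\bar x\|_2$ over $\mathcal{U}$.

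\emph{Anticipated obstacle.} Parts (i) and (iii) are essentially bookkeeping; the nontrivial work is part (ii). The main difficulty is choosing the Young's-inequality splittings so that the resulting quadratic bounds in $\|x\|^2$ and $\|u\|^2$ line up coefficient-by-coefficient with the specific right-hand sides of \eqref{eq:l1}--\eqref{eq:l2}, particularly the combined term $\|S+QD\|^2$ and the factors of the form $(\|Q\|^2+1)(L_g+\|\cdot\|)^2$. Getting these constants right—rather than proving some weaker dissipativity certificate with looser constants—is where the algebraic care is needed.
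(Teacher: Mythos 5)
Your proposal is correct and follows essentially the same route as the paper: part (i) reads off the strict-dissipativity LMI, part (ii) evaluates the linear model at the nonlinear state, decomposes the nonlinear supply rate into the linear supply rate plus an output-error term, and absorbs that error via the Lipschitz bound and Young's inequality into the margins prescribed by \eqref{eq:l1}--\eqref{eq:l2}, and part (iii) uses the fact that the state trajectories of $S_l$ and $S_b$ coincide so that $\tilde y-\bar y=\Delta C\,\bar x$. (Your weighted triangle inequality in part (iii) is in fact slightly more careful than the paper's, which applies $\|a+b\|_2^2\leq\|a\|_2^2+\|b\|_2^2$ directly.)
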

	\begin{proof}
		We separately prove each part of Theorem \ref{thm:pert}. 
		\begin{enumerate}
			\item[(i)] If $\mathcal{P}_1$ is feasible, then \eqref{eq:p2} is satisfied for the dynamics \eqref{eq:linear}. Therefore, from \eqref{eq:passivity}, $S_l$ is SD.
			\item[(ii)] Define the error in the input-output response between the linear model $S_l$ and the nonlinear system $S_{nl}$ as
			\begin{equation}
			\label{eq:epsilon}
			\begin{aligned}
			% \epsilon_f &= A\tilde x+ Bu-f(x,u)\\
			\epsilon_g &= C\tilde x+ Du-g(x,u).
			\end{aligned}
			\end{equation}
			Then, we have
			% \begin{equation}\label{eq:a1}
			$y=\tilde y-\epsilon_g.$ 
			% \end{equation}
			Now, if $\mathcal{P}_1$ is feasible, $S_l$ is SD and satisfies \eqref{eq:passivity}. If \eqref{eq:passivity} holds for any $\tilde x$, then it must also hold for $\tilde x=x$.  Therefore, we have
			\begin{equation}\label{eq:a2}
			\tilde y'Q\tilde y+ u'Ru +2 \tilde y'Su \geq \rho ||x||^2+\nu ||u||^2. 
			\end{equation}
			Also, from \eqref{eq:epsilon} and \eqref{eq:lipschitz}, %we have
			% $$\epsilon_g =-g(x,u)+g(0,0)+Cx+Du.$$
			%Since $g(0,0)=0$, 
			we can write
			\begin{equation}\label{eq:a5}
			\begin{aligned}
			||\epsilon_g||&\leq L_g ||x||+L_g ||u||+||C|| ||x||+||D|| ||u||\\
			&=\left(L_g+||C||\right)||x||+\left(L_g+||D||\right)||u||.
			\end{aligned}
			\end{equation}
			Using Jensen's inequality in \eqref{eq:a5} gives 
			\begin{equation}\label{eq:a6}
			||\epsilon_g||^2\leq 2(L_g+||C||)^2||x||^2+2(L_g+||D||)^2||u||^2.
			\end{equation}
			
			Now consider 
			\begin{equation}
			\begin{aligned}\label{eq:I}
			I&=y'Qy+2y'Su+u'Ru\\
			&=(\tilde y -\epsilon_g)Q(\tilde y -\epsilon_g)+2(\tilde y -\epsilon_g)'Su+u'Ru\\
			&=\phi- 2\epsilon_g'Q\tilde y -2 \epsilon_g'S u
			\end{aligned},
			\end{equation}
			where $\phi=\tilde y'Q\tilde y+ u'Ru +2 \tilde y'Su - \epsilon_g'Q\epsilon_g.$ Then, from \eqref{eq:a2} and \eqref{eq:a6}, we have,
			\begin{equation}\label{eq:a7}
			\begin{aligned}
			\phi&\geq\left(\rho-2||Q||^2(L_g+||C||)^2\right)||x||^2\\&+\left(\nu-2||Q||^2(L_g+||D||)^2\right)||u||^2.
			\end{aligned}
			\end{equation}
			
			We also have
			\begin{equation}\label{eq:a8}
			2\epsilon_g'Q\tilde y +2 \epsilon_g'S u=2\epsilon_g'QCx+2\epsilon_g'(S+QD)u,
			\end{equation}
			where
			\begin{equation}
			\begin{aligned}\label{eq:a9}
			2\epsilon_g'QCx &\leq ||\epsilon_g||^2+||Q||^2||C||^2||x||^2,\\
			% &\leq (2\lambda_{max}(Q'Q)(L_g+||C||)^2+\lambda_{max}(C'C))||x||^2 \\
			% &+2\lambda_{max}(Q'Q)(L_g+||C||)^2||u||^2.
			% \end{aligned}
			% \end{equation}
			% and
			% \begin{equation}\label{eq:a10}
			% \begin{aligned}
			2\epsilon_g'(S+QD)u &\leq ||\epsilon_g||^2+||(S+QD)||^2||u||^2.
			% &\leq (2\lambda_{max}(Q'Q)(L_g+||C||)^2+\lambda_{max}(C'C))||x||^2 \\
			% &+2\lambda_{max}(Q'Q)(L_g+||C||)^2||u||^2.
			\end{aligned}
			\end{equation}
			
			If $\mathcal{P}_1$ is feasible, then \eqref{eq:l1} and \eqref{eq:l2} hold. Then, using \eqref{eq:l1}, \eqref{eq:l2} and \eqref{eq:a7}-\eqref{eq:a9} in \eqref{eq:I}, we have
			\begin{equation}\label{eq:I2}
			I \geq \hat \rho ||x||^2 +\hat \rho||u||^2 \geq 0,
			\end{equation}
			with 
			$  \hat \nu >0, \quad \hat \rho >0,$
			where
			\begin{equation*}
			\begin{aligned}
			\hat \rho =& \rho -||Q||^2||\bar C+\Delta C||^2\\
			&-2\left(||Q||^2+I\right) (Lg+||\bar C +\Delta C)||)^2,\\
			\hat \nu =& \nu -2\left(||Q||^2+I\right) (Lg+||D||)^2-||S+QD||^2.
			\end{aligned}
			\end{equation*}
			Using \eqref{eq:I2} in Definition \ref{def:passivity}, $S_{nl}$ is locally dissipative in a neighborhood $\mathcal{X}\times \mathcal{U}$ of the origin if $\mathcal{P}_1$ is feasible, where $\mathcal{X}\times \mathcal{U}$ is an $\epsilon$-ball around the origin, with
			\begin{equation}\label{eq:ball}
			\epsilon = \min \left(\frac{\epsilon_g}{\sqrt{2}(L_g+||\bar C+\Delta C||)},\frac{\epsilon_g}{\sqrt{2}(L_g+||\bar D||)}\right).
			\end{equation}
			\item[(iii)] If $\mathcal{P}_1$ is feasible, then, for the baseline model $S_b$ with $\bar u = u \in \mathcal{U}$, we have
			% \begin{equation}\label{eq:a9}
			$||\bar y-\hat y||^2_2 <\bar \delta_y.$ 
			% \end{equation}
			Then, we can write
			\begin{equation*}
			\begin{aligned}
			||\tilde y-\hat y||^2_2 &= ||\tilde y-\hat y+\bar y -\bar y||^2_2\\
			&\leq ||\tilde y-\bar y||^2_2 + ||\bar y-\hat y||^2_2 \\
			&\leq ||\tilde y-\bar y||^2_2 +\bar \delta_y \\
			%    &\leq \alpha ||x||^2 +\bar \delta_y\\
			&\leq \alpha \epsilon^2 +\bar \delta_y=(1+\beta)\delta_y, \quad \beta={\alpha \epsilon^2}/{\bar \delta_y}.
			\end{aligned}
			\end{equation*}
		\end{enumerate}
		\vspace{-0.5em}
	\end{proof}
	
	Theorem \ref{thm:pert} provides conditions that can be used to choose the perturbation such that the linear model obtained closely approximates the original nonlinear system. Further, if $\mathcal{P}_1$ is feasible, then the nonlinear system $S_{nl}$ is strictly dissipative in a neighborhood around the origin. The process of identifying a linear model $S_l$ that solves problem $\mathcal{P}$ is provided in Algorithm 1.
	% \begin{subfigures}
	% \begin{figure}
	% \vspace{-0.45em}
	% \end{figure}
	
	\begin{algorithm}[H]
		\caption{Identification of dissipative model}
		\label{alg:fitting}
		\hspace*{\algorithmicindent} \textbf{Input} Measurement vectors $\{\hat y\}$ and $\{\hat u\}$.  \\
		\hspace*{\algorithmicindent} \textbf{Output} $A$, $B$, $C$, $D$ and $\alpha$.
		\begin{algorithmic}[1]
			\State \textbf{Estimate baseline model:} Use standard subspace or regression-based identification techniques \cite{katayama2006subspace}\cite{verhaegen2007filtering} to estimate $\bar A$, $\bar B$, $\bar C$, $\bar D$ of $S_b$ such that $||\bar y-\hat y||_2^2$ is minimized. 
			\State \label{compute} Check if ${\mathcal{P}_2}$ is feasible, where 
			\begin{eqnarray*} \label{eq:p1}
				\mathcal{P}_2: &\quad \mbox{Find:} \quad {\nu>0,\rho>0,P>0} \\
				\text{s.t.} &\quad \begin{bmatrix}\label{eq:pert_ssip}
					\bar A'P+P\bar A-\bar C'Q\bar C+\rho I &P\bar B-\hat S \\
					\bar B'P-\hat S'+ & -\hat R+\nu I
				\end{bmatrix}\leq 0,\\
				&\hat S=\bar C'S+\bar C'Q\bar D,\\& \hat R=R+\bar D'S+S'\bar D+\bar D'Q\bar D.
			\end{eqnarray*}
			\If{$\mathcal{P}_2$ is feasible,} 
			\State Set $A=\bar A$, $B=\bar B$, $C=\bar C$, $D=\bar D$.
			\Else
			\State \textbf{Perturbation model:} Set $C_i \gets \bar C + \Delta C$, where $\Delta C=\gamma \mathbf{1}$. 
			\State Find $\nu>0,\rho>0, P>0$ and $\gamma>0$ solving $\mathcal{P}_1$ with constraints \eqref{eq:p2} and \eqref{eq:l2}. \label{alg:s1}
			
			\If{$\rho$ from Step \ref{alg:s1} satisfies constraint \eqref{eq:l1}}
			\State Set $A=\bar A$, $B=\bar B$, $C=\bar C+\Delta C$, $D=\bar D$.
			\State Compute $\alpha$.
			\Else
			\State Increase $\rho\mapsto \rho+d$, where $d>0$.
			\State Go to Step \ref{alg:s1}. 
			\EndIf
			\EndIf
		\end{algorithmic}
		% 	\vspace{0.5em}
	\end{algorithm}
	% \end{subfigures}
	%%%%%%%%%%%%%%%%%%%%%%%%%%%%%%%%%%%%%%%%%%%%%%%%%%%%%%%%
	\begin{remark}
		We make the following observations about the results in Theorem \ref{thm:pert}.
		\begin{enumerate}
			\item[(i)] As the size of the perturbation  $||\Delta C||^2_2$ increases, the constraint \eqref{eq:l2} becomes harder to satisfy, that is, the model will require higher dissipativity indices, consequently resulting in a poorer fit. Therefore, we observe that  the error bound $(1+\beta)\bar \delta_y$ of the linear model $S_l$ grows with the size of the perturbation $\alpha$.
			\item[(ii)] Equation \eqref{eq:ball} provides a condition relating the size the perturbation to the radius in which local strict dissipativity of the nonlinear system $S_{nl}$ can be guaranteed by strict dissipativity of the linear model $S_l$. The $\epsilon$-neighborhood in which the local dissipativity of the nonlinear system is guaranteed shrinks with the size of the perturbation. Therefore, while large perturbation may be used to obtain a dissipative linear model of a nonlinear system, the radius of validity of this model and the radius of dissipativity of the nonlinear system would be extremely small.  
			\item[(iii)] The objective function of $\mathcal{P}_1$ is non-convex. This can be addressed by choosing a fixed perturbation $\Delta C=\gamma \mathbf{1}$, thereby transforming the objective function to 
			$$\min \limits_{\nu>0,\rho>0,P>0,\gamma} \gamma^2.$$
			\item[(iv)] The constraint \eqref{eq:l1} in $\mathcal{P}_1$ is non-convex. However, in practice, it is easy to solve $\mathcal{P}_1$ in two steps. First, we find some $\nu>0$ and $\rho>0$ such that $\mathcal{P}_1$ is feasible with constraints \eqref{eq:p2} and \eqref{eq:l2}. Then, we check if \eqref{eq:l1} is  feasible. If not, we increase the value of $\rho$ and re-solve $\mathcal{P}_1$. 
			\item[(v)] With a small perturbation, the dissipative linear model $S_l$ closely approximates the behavior of the original nonlinear system, provided that the error of the identification procedure used to obtain the baseline model is sufficiently small. 
			\item[(vi)] For specific cases of dissipativity, such as passivity, the constants $\rho$ and $\nu$ have special meaning since they can be interpreted as measures of levels of passivity through the concept of passivity indices.
		\end{enumerate}
	\end{remark}

	\section{Case Studies}\label{sec:case}
	In this section, we provide two numerical examples to illustrate the identification approach proposed in Section \ref{sec:sol}. %We then provide a more complex application of this approach to learning a dissipative model of a microgrid with large-scale renewable energy penetration.
	\subsubsection*{Example 1}
	We consider the nonlinear system 
	\begin{equation}\label{eq:learn_nonlinear}
	\begin{aligned}
	\dot x_1&= -x_1^2 +x_2, \quad 
	\dot x_2=-x_1-x_2+(0.5x_1+1)u\\
	y&=x_1+x_2+(0.5x_1+1)u.
	\end{aligned}
	\end{equation}
	It can be verified that \eqref{eq:learn_nonlinear} is dissipative, and more specifically, strictly passive in the sense of Remark \ref{rem:qsr}-(ii). Therefore, we would like to learn a linear model to reflect this property. 
	Following the procedure in Algorithm \ref{alg:fitting}, we first learn a baseline linear model of this system with system matrices
	\begin{equation}\label{eq:learn_baseline}
	\begin{aligned}
	\bar A&=\begin{bmatrix}
	0 & 1\\ -46.24&-22.31 
	\end{bmatrix}, \bar B=[0 \quad 1]',\\ \bar C&=[95.61 \quad -4.78], \quad \; \bar D=0.1
	\end{aligned}
	\end{equation}
	using the MATLAB System Identification Toolbox. The response of the baseline model and the training data used to obtain the model are shown in Fig. \ref{fig:training_model}. 
	We then verify that $\mathcal{P}_2$ in Step 2 of Algorithm \ref{alg:fitting} is not feasible with the baseline model \eqref{eq:learn_baseline}. Therefore, we follow the procedure outlined in the algorithm to obtain the perturbed linear model with $\Delta C =\gamma \mathbf{1}=9.53 \times [1 \quad 1]$. This perturbed linear model is strictly passive, and satisfies \eqref{eq:ssip_lmi} with the appropriate dissipativity matrices. We also observe that the linear model closely approximates the nonlinear system by validating its input-output response against a test data set (Fig. \ref{fig:passive_model}).
	\begin{figure}[!t]
		\centering
		\vspace{0.6em}
		{\includegraphics[scale=0.26,trim=0cm 0.5cm 0cm 0.5cm]{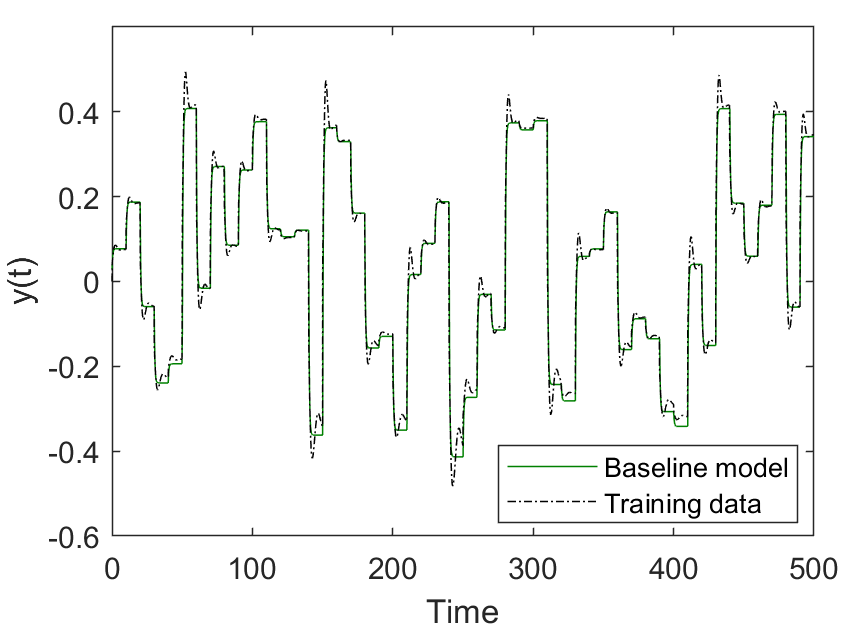}} \caption{Baseline model and training data for system.}\label{fig:training_model}
		\vspace{0.1em}
		{\includegraphics[scale=0.26,trim=0cm 0.5cm 0cm 0cm]{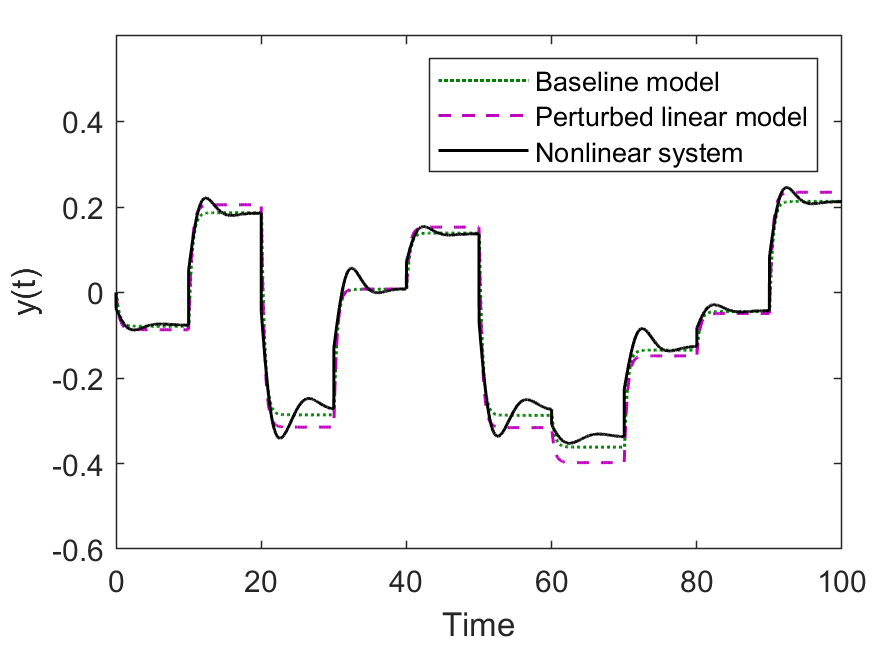}}  \caption{Input-output performance of the strictly passive linear model, the baseline model and the nonlinear system. \vspace{-2.3em}}\label{fig:passive_model} 
		%   \caption{Passive model identification for nonlinear system \eqref{eq:learn_nonlinear}.}\label{fig:example}
	\end{figure}
	\subsubsection*{Example II - Microgrid}
	We now consider the application of the proposed approach to obtain a dissipative model of the 14-bus microgrid system shown in Fig. \ref{fig:test_system}, in the vicinity of a specific power flow operating point (equilibrium). 
	The system shown in Fig. \ref{fig:test_system} is obtained as a modification the standard IEEE 14-bus test system \cite{ieee14} by replacing the largest generators in the system at buses 1, 2 and 3 with equivalent DFIG wind, photovoltaic and solid oxide fuel cell plants of 600 kVA, 60 kVA and 60 kVA respectively. The synchronous generators at buses 6 and 8 are rated 25 kVA each (see \cite{sivaranjani2013networked} detailed state space models of the system).
	\begin{figure}[b]
		\centering
		\vspace{-1.7em}
		\includegraphics[scale=0.3,trim=0cm 1cm 0cm 0cm]{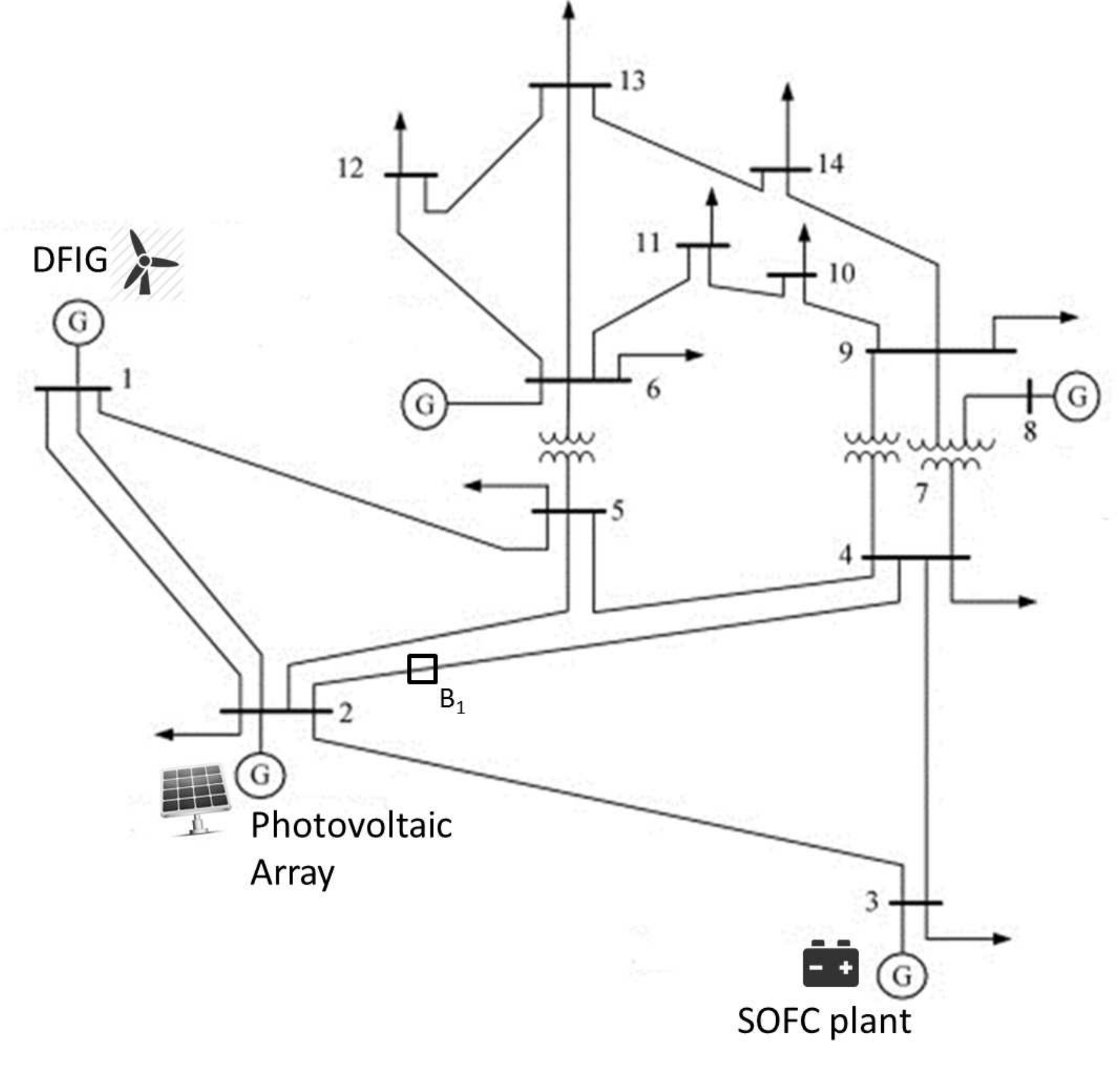}
		\caption{Example: 14-bus microgrid for dissipative model identification.}
		\label{fig:test_system}
	\end{figure}
	Therefore, 93.5\% of the generation in this system is attributed to renewable generators, making this system challenging to control. However, the system is known to be conic in the sense of Definition \ref{rem:qsr}-(iv), and this property can be exploited to design controllers that enhance the performance and stability of this system, even with the variability introduced by the renewable energy generators \cite{sivaranjani2018conic}\cite{agarwal2017feedback}. Therefore, we obtain a linear conic model of this system using the procedure described in Algorithm \ref{alg:fitting}. We note that a baseline model for this system can be readily obtained since the structure of the nonlinear differential equations, as well as estimates of the system parameters are well known from the system physics \cite{sivaranjani2013networked}. Figure~\ref{fig:test_compare} shows the comparison between the measured voltage outputs and those generated by the conic model at bus 1 (wind generator) for a load change (disturbance) where all loads in the network are decreased by 2\%. These results indicate that models with suitable dissipativity properties can be constructed to closely approximate the dynamics of complex nonlinear networked systems around specific operating points.
	\begin{figure}
		\centering
		\vspace{0.6em}
		\includegraphics[scale=0.14,trim=0cm 1cm 0.2cm 2cm]{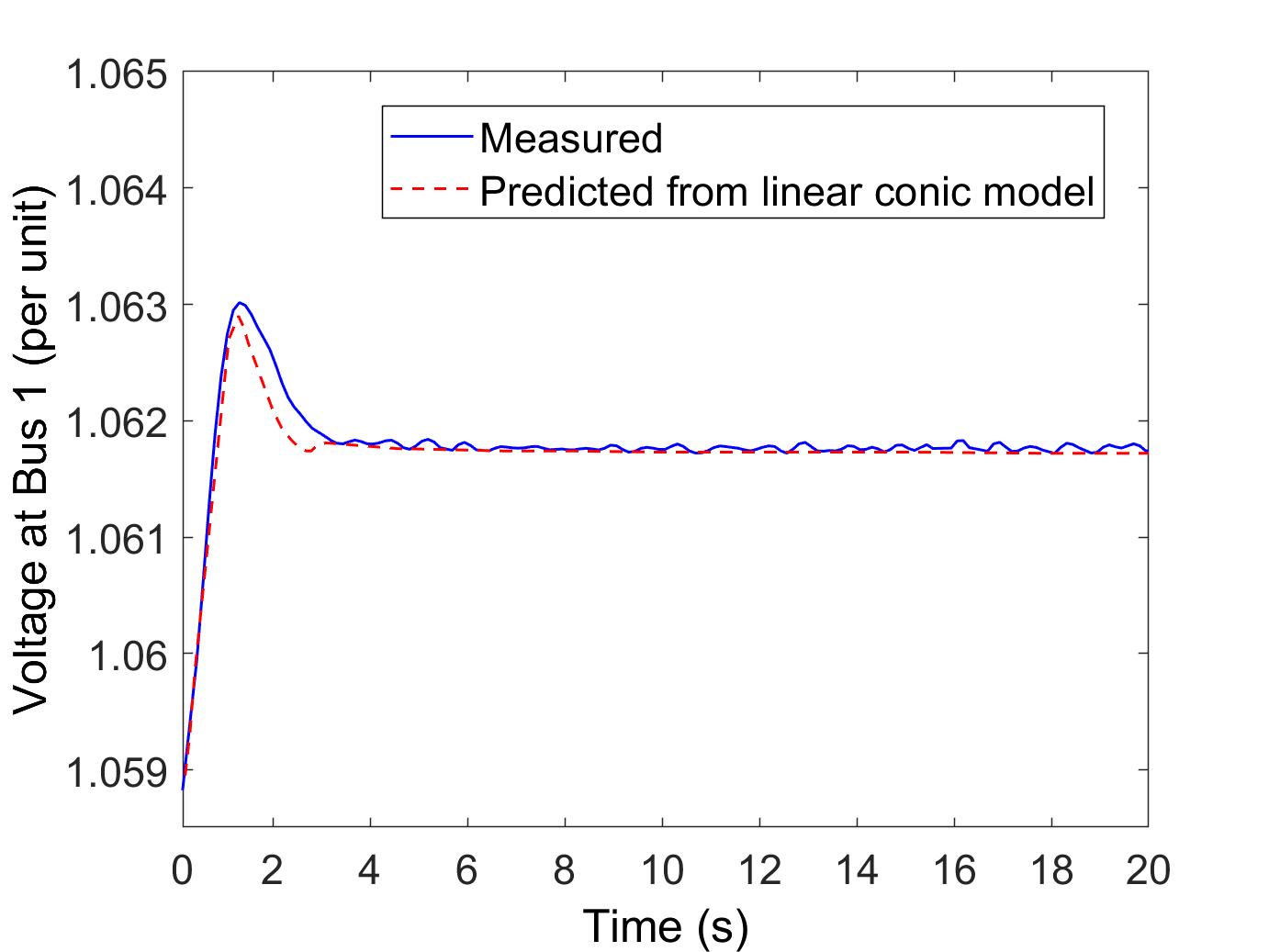}
		\caption{{Comparison of linear dissipative (conic) model and nonlinear system for 14-bus test microgrid.\vspace{-1.2em}}}
		\label{fig:test_compare}
	\end{figure}
	\section{Conclusion}%\label{sec:case}
	We considered the problem of identifying a dissipative linear model of an unknown nonlinear system from time-domain input-output data, when a baseline linear model of the system can be easily obtained using the physics of the system and/or standard system identification techniques. We propose a technique to perturb the system matrices of the baseline model to obtain a strictly dissipative linear model that closely approximates the original nonlinear system. While the proposed approach is offline, it is promising to extend the perturbation approach to quickly identify dissipative models in an online setting, where a baseline model is typically already available. 
	
	\bibliographystyle{IEEEtran}
	\bibliography{references} 
	
	\balance
\end{document}